\numberwithin{equation}{section}
\newtheorem{thm}{Theorem}[section]
\newtheorem{Proposition}[thm]{Proposition}
{\theoremstyle{definition}
\newtheorem{Definition}[thm]{Definition}
\newtheorem{Remark}[thm]{Remark}}
\newcommand{\C} {\mathbb{C}}
\newcommand{\Cs} {\mathbb{C}_s}
\newcommand{\R}{\mathbb{R}}
\newcommand{\by}{\boldsymbol{y}}
\newcommand{\bx}{\boldsymbol{x}}
\newcommand{\bt}{\boldsymbol{t}}
\newcommand{\al}{\alpha}
\newcommand{\de}{\delta}
\newcommand{\ep}{\epsilon}
\newcommand{\rSU}{\mathrm{SU}}
\newcommand{\rSL}{\mathrm{SL}}
\newcommand{\rM}{\mathrm{M}}
\newcommand{\tr}{\mathrm{tr}}
\newcommand{\ddet}{{\mathrm{det}}}
\newcommand{\lra} {\longrightarrow}
\begin{document}
\allowdisplaybreaks

\newcommand{\arXivNumber}{1705.01755}

\renewcommand{\PaperNumber}{066}

\FirstPageHeading

\ShortArticleName{Quantum Klein Space and Superspace}

\ArticleName{Quantum Klein Space and Superspace}

\Author{Rita FIORESI~$^\dag$, Emanuele LATINI~$^{\dag\ddag}$ and Alessio MARRANI~$^\S$}

\AuthorNameForHeading{R.~Fioresi, E.~Latini and A.~Marrani}

\Address{$^\dag$~Dipartimento di Matematica, Universit\`{a} di Bologna,\\
\hphantom{$^\dag$}~Piazza di Porta S.~Donato 5, I-40126 Bologna, Italy}
\EmailD{\href{mailto:rita.fioresi@UniBo.it}{rita.fioresi@UniBo.it}, \href{mailto:emanuele.latini@UniBo.it}{emanuele.latini@UniBo.it}}

\Address{$^\ddag$~INFN, Sez. di Bologna, viale Berti Pichat 6/2, 40127 Bologna, Italy}

\Address{$^\S$~Museo Storico della Fisica e Centro Studi e Ricerche ``Enrico Fermi'',\\
\hphantom{$^\S$}~Via Panisperna 89A, I-00184, Roma, Italy}
\EmailD{\href{mailto:jazzphyzz@gmail.com}{jazzphyzz@gmail.com}}

\ArticleDates{Received February 23, 2018, in final form June 15, 2018; Published online June 28, 2018}

\Abstract{We give an algebraic quantization, in the sense of quantum groups, of the complex Minkowski space, and we examine the real forms corresponding to the signatures $(3,1)$, $(2,2)$, $(4,0)$, constructing the corresponding quantum metrics and providing an explicit presentation of the quantized coordinate algebras. In particular, we focus on the Kleinian signature $(2,2)$. The quantizations of the complex and real spaces come together with a coaction of the quantizations of the respective symmetry groups. We also extend such quantizations to the $\mathcal{N}=1$ supersetting.}

\Keywords{quantum groups; supersymmetry}

\Classification{17B37; 16T20; 20G42; 81R50; 17B60}

\section{Introduction} \label{intro-sec}

Undergoing crucial advances from Euclid to Newton and then from Minkowski to Einstein, classical space-time has been one of the pillars in the conceptual building of physics. The theory of relativity formulated by Einstein stressed the crucial role played by concept of locality, which lies at the heart of the formulation of effective field theory, one of the most powerful descriptive frameworks to capture the features of a variety of physical systems at relatively low-energies.

Thus, it is surprising that quantum physics had little or no influence at all on the concepts of space and time, despite the essential nature of non-locality being one of the striking (and most counter-intuitive, though) features of quantum mechanics. If one excludes quantum gravity, all theories characterized by quantum processes and dynamics ultimately relies on a~classical picture of space-time.

A quantum theory of gravity is essentially non-local, as it immediately follows from the existence of a fundamental length, the Planck length. Indeed, in both superstring/M-theory and loop quantum gravity, a minimal distance measure occurs, hinting for the fact that the infinitely-differentiable space-time manifold may be an illusory picture, breaking down at very small scales. This suggests that space-time may be quantized; Snyder \cite{Snyder} was the first to introduce the term ``quantized space-time''. Consequently, quantum gravity should face the challenging aim of understanding the nature of quantum space-time, because it is reasonable to expect that, at some fundamental level, the classical notion of space-time should be superseded by some suitable quantum notion.

As it seems intuitively natural from the perspective of local effective field theories, the usual approach to such fundamental conceptual issues in contemporary theoretical physics is that quantum space-time should be related to phenomena in which a strong gravitational field is involved, and/or that the quantum modifications of classical space-time are limited to very tiny regions, namely of the size of the Planck length. It thus follows that the fundamental properties of quantum space-time would not be pertaining to ordinary non-gravitational phenomena.

One of the basic ways in which the classical notion of space-time is generalized is non-commutative geometry (cf., e.g., \cite{NC-1, NC-2,frt, NC-3}, and references therein). In this framework, the non-commutative algebra of `quantum space-time coordinates' \cite{NC-1, NC-5, NC-6, ma2, NC-4, NC-3} works as an ultra-violet regulator, with the Planck length being the minimal resolution length scale which generates uncertainty relations among non-commuting coordinates. This would allow to describe the `fuzzy' or `discrete' nature of the space-time, occurring only at very small distances or high energies \cite{NC-7, NC-8}. Even if most formulations of non-commutative space-time suffer from explicit violations of the Lorentz invariance, many examples of covariant non-commutative space-times, retaining all global symmetries of the underlying classical theory, have been considered in literature \cite{Luk-1, Luk-2, Snyder, Yang}.

Studies of simple dynamical models in quantized gravitational background (see, e.g., \cite{quantum-3, quantum-2, quantum-1}) also indicate that the Lie-algebraic space-time (Lorentz, Poincar\'{e}) symmetries are modified into quantum symmetries, described by non-cocommutative Hopf algebras; these latter, after Drinfeld, have been named quantum deformations or quantum groups~\cite{quantum-groups}.

It is here worth recalling that most of the `quantum' space-times investigated so far are non-commutative versions of the Minkowski space-time, thus with signature $(s,t)=(3,1)$ \cite{M-4, fi1, Luk-1, M-2, M-1, M-3}. In \cite{Zumino}, Zumino, Wess, Ogievetsky and Schmidke adopted an \textit{ad hoc} approach for the quantization of the $D=4$ Minkowski space, exploiting the coaction of the Poincar\'{e} group on it. In \cite{cfl2,cfl1, Cerv-1} the quantum deformation of the complex (chiral) Minkowski and conformal superspaces was investigated by exploiting the formal machinery of flag varieties developed in~\cite{fi1, Fior-2}. Recently, in~\cite{Spin}, another space-time quantization based on the spinorial description, namely a~string theory inspired ${\rm Spin}(3,1)$ worldsheet action, was introduced. Moreover, some recent studies considered the construction of non-commutative space-times with non-vanishing cosmological constant, dealing with the intriguing issue of the interplay between quantum gravity effects and the non-vanishing curvature of space-time, with interesting cosmological consequences of Planck scale physics (cf., e.g., \cite{Ballesteros} and references therein).

As evident from above, in the context of the study of non-commutative space-time structures at Planckian distances, the procedure of deformation of the space-time coordinates and of the space-time symmetries plays a key role. In this respect, one of the main approaches to the classification of deformations has been provided by the theory of classical $r$-matrices \cite{r-1, cp, r-4, r-2}; concerning the various possible signatures in $D=4$, recently, in \cite{Lukk, Lukk-2,Lukk-3} the deformations of the orthogonal Lie algebras $\mathfrak{o}(4-k,k)$ (with $k=0,1,2$), were investigated (together with the quaternionic real form $\mathfrak{o}^{\ast }(4)$), thus dealing with all possible real forms of $\mathfrak{o}(4;\mathbb{C})$.

In the present paper, we exploit a novel procedure which allows us to investigate the quantization deformations of the $D=4$ space-time in all possible signatures; in particular, we focus on the less known case with two timelike and two spacelike dimensions, namely on the Kleinian (also named ultrahyperbolic) signature $(s,t)=(2,2)$. The corresponding real form of $\mathfrak{o}(4;\mathbb{C})$ is~$\mathfrak{o}(2,2)$, which is also used in two-dimensional double field theory (see, e.g., \cite{2D-1, 2D-2}) or employed as $D=3$ AdS Lie algebra\footnote{The quantum deformation of the $D=3$ AdS group was constructed in~\cite{Ballesteros1}.}. Geometries in Kleinian signature currently remains a vast and yet unexplored realm, with a rich mathematical structure, investigated only in a~few papers (cf., e.g., \cite{K-1, K-2, K-3, K-4, K-5, K-6}). The study of the Klein signature should not be considered as a mere mathematical \textit{divertissement}, but rather its interest stems from relevant physical motivations. Just to name a few, the study of symmetries of scattering amplitudes in super Yang--Mills theories and in supergravity stressed out the relevance of Kleinian signature, especially in $D=4$; in fact, in~\cite{OV} Ooguri and Vafa showed that~$\mathcal{N}=2$ superstring is characterized by critical dimension $D=4$, with bosonic part given by a~self-dual metric of signature $s=t=2$. Moreover, $4$-dimensional Kleinian signature essentially pertains to twistors~\cite{twistors}, which provide a~powerful computational tool of scattering amplitudes~\cite{ampli}. In~\cite{flm} the $4$-dimensional Klein space~$\mathcal{M}_{2,2}$ was studied, through its definition inside the related Klein-conformal space, along with its supersymmetric extensions, namely the Klein $\mathcal{N}=1$ superspace~$\mathcal{M}_{2,2|1}$ and the corresponding Klein-conformal $\mathcal{N}=1$ superspace. To the best of our knowledge, this, together with our previous work~\cite{flm}, is the first investigation of the superspace with Kleinian (bosonic) signature.

A direct approach to quantum deformations stems from \cite{FL-1, flm}, and it exhibits an intrinsic elegance based on split algebras $\mathbb{A}_{s}$'s in particular the split complex numbers $\mathbb{C}_{s}$. This is the subject of the present paper, which, as mentioned, will consider the quantum deformation of both real Klein and Klein-conformal $\mathcal{N}=1$ spaces. We will not pursue the aforementioned approach based on $r$-matrices, but rather we will deal with the procedure introduced by Manin~\cite{Manin}; it is here worth pointing out that Manin's approach is equivalent to the one based on $r$-matrices, at least for ${\rm SL}(n)$ groups, which for $n=4$ is the case under investigation. In the 1980's, developing ideas in part due to Penrose a few years earlier~\cite{Penrose}, he introduced the approach to $D=4$ Minkowski space as the manifold of the points of a big cell in the Grassmannian of complex two-dimensional subspaces of a complex four dimensional space (twistor space), extending this framework to the supersymmetric ($\mathcal{N}=1$) case~\cite{ma1}. After some general but rather abstract studies in~\cite{RL,TT}, the study of quantization of flag manifolds and of complex Minkowski space \textit{\`{a}~la Manin} was worked out in~\cite{fi1}, in which were also given the two involutions of the quantum complex Minkowski space, respectively defining the real Minkowski space-time and the real Euclidean space in four dimensions.

The plan of the paper is as follows.

In Section~\ref{cs-sec} we give a brief account of the classical Minkowski and Klein spaces, both in the complex and real setting, together with their symmetry groups.

In Section~\ref{prelim-sec}, we consider quantum groups and their homogenous spaces. We approach the theory of deformation according to Manin~\cite{ma2}, that is, we quantize the coordinate rings of our algebraic geometric objects, spaces and groups. This approach is equivalent to the $r$-matrix formulation, proposed originally by Fadeev et al.\ in~\cite{frt} (see also~\cite{cp} for a complete account, and references therein). Our approach is slightly more general than the ones present the literature, since we take our ground ring~$A$ to be a $k$-algebra, where~$k$ is~$\R$, $\C$ or~$\C_s$. Observe that one could also furtherly generalize this approach by substituting the complex and split complex units $i$ and $j$ with the so called dual or parabolic one $\epsilon$, with $\epsilon^2=0$. The use of the parabolic unit plays a relevant role within the context of non-Euclidean spaces \cite{epsilon} and it was applied in the framework of quantum groups and quantum Cayley--Klein algebras in \cite{Ballesteros2,Ballesteros3, Gromov}.

As we shall see, this allows a unified treatment for all signatures of the real forms of the complex Minkowski space, and provides a natural generalization to the quantum supersetting. Furthermore, in our treatment, the actions of the symmetry groups get quantized \textit{in tandem} with the homogeneous spaces, though we have to interpret them as \textit{coactions}, since, in the quantum group setting, as usual, the geometric space is replaced by its algebra of functions (in this very case, the algebra of polynomial functions).

In Section~\ref{real-sec} we examine the real forms of the quantum spaces introduced in the previous section, thus producing a deformation of Euclidean, Minkowski and Klein spaces, together with their metrics. We also compute explicitly the commutation relations among generators for the quantum rings of the Euclidean, Minkowski and Klein spaces, thus giving a presentation of such quantum rings. The metrics appear naturally as the quantum determinants of suitable quantum matrices. Our approach allows to obtain at one stroke also the quantization of the symmetry groups: we in fact obtain also the quantum Poincar\'{e} groups as well as a coaction of such quantum groups on the corresponding quantum spaces.

Finally, in Section~\ref{smink-sec} we generalize the constructions of the previous sections to the $\mathcal{N}=1$ supersetting. In particular, in Section~\ref{qsgrps-subsec} we introduce the notion of quantum supergroup, whereas in Section~\ref{qsmink-subsec} we consider the quantum chiral Minkowski superspace. Then, for brevity's sake, in Section~\ref{superreal-subsec} we focus (working on $\mathbb{C}_{s}$) on the real form associated to the Klein involution.

\section{The classical spaces} \label{cs-sec}
In this section we describe the classical Minkowski and Klein spaces using an approach, which turns out particular fruitful for our subsequent treatment of the quantizations of these spaces together with their symmetry groups and real forms.

\subsection{The (split) complex Minkowski space} \label{cm-subsec}

The complex Minkowski space can be realized inside the conformal space $\rSL_4(\C)/P$, $P$ being an upper maximal parabolic, as an open set, called the {\it big cell}. It admits a natural action of the Poincar\'e group and Weyl dilations, which sit as subgroups inside the conformal group (we address the reader to~\cite{fl,flv} for more details) that reads as
\begin{gather}
\underbrace{\begin{pmatrix}x&0\\tx&y\end{pmatrix}}_{\textrm{Poincar\'e$\times$dilations}},\underbrace{\begin{pmatrix} I_2 \\ n \end{pmatrix}}_{\textrm{big cell}}\mapsto
\underbrace{ \begin{pmatrix} I_2 \\ ynx^{-1}+t \end{pmatrix}}_{\textrm{big cell}} ,\label{poincare}
\end{gather}
where all of $x$, $y$, $t$ and $n$ are $2 \times 2$ matrices with complex coefficients. Notice that $\mathrm{diag}(x,y)$ parametrizes dilations, while translations are represented by the matrix $t$. We then identify
the complex Minkowski space with the translational part of the Poincar\'{e} group, that is the space of $2\times 2$ complex matrices. Note that the subgroup
\begin{gather*}
H=\left\{\begin{pmatrix}x&0\\0&y\end{pmatrix}, x,y\in \rSL_2(\C) \right\}
\end{gather*}
preserve the determinant of $n$, that we then identify with the norm of a vector; this establishes the homomorphism
\begin{gather*}
\rSL_2(\C)\times \rSL_2(\C)\rightarrow \textrm{SO}_4(\C).
\end{gather*}
In \cite{flm} we have shown how this picture can be consistently generalized by substituting $\C$ with $\Cs$ in order to consider a more general picture and different real forms; we will refer to it generally as the complex Minkowski space $\mathcal{M}^k_{4}$, and it will be clear from the context which algebra we use; the determinant of the various real forms will naturally select different pseudo-Riemannian signatures.

\subsection{Real forms and involutions} \label{rfi-subsec}

Let $k=\C$ or $\Cs$ and $\mathcal{A}$ be a commutative algebra over $k$. An involution is a map $*\colon \mathcal{A} \lra \mathcal{A}$ satisfying the properties
\begin{gather*}
(\alpha f+\beta g)^*=\bar{\alpha}f^*+\bar{\beta}g^*,\qquad
(fg)^*=f^*g^*,\qquad
(f^*)^*=f,
\end{gather*}
where $f,g\in \mathcal{A}$, $\alpha,\beta\in k$ and the bar indicates the usual conjugation in $k$. In this setting, we do not need to specify whether the involution is multiplicative or antimultiplicative, since we are working with commutative algebras, while in the quantum case this choice will be crucial. We then consider the following three involutions on functions on $\mathbb{C}[\mathrm{M}_2(k)]$, the algebra of polynomials on the matrices:
\begin{gather*}
*_M\colon \ \begin{pmatrix} a & b \\ c & d \end{pmatrix} \mapsto \begin{pmatrix} a& c \\ b & d\ \end{pmatrix},\\
*_E \colon \ \begin{pmatrix} a & b \\ c & d \end{pmatrix} \mapsto \begin{pmatrix} d & -c \\ -b & a \end{pmatrix},\\
*_K \colon \ \begin{pmatrix} a & b \\ c & d \end{pmatrix} \mapsto \begin{pmatrix} a & b \\ c & d \end{pmatrix}.
\end{gather*}
We refer to $*_M$ (resp.~$*_E$, $*_K$) as the \textit{Minkowski $($resp.\ Euclidean, Klein$)$ involution}. In fact, that, if $k=\C$, the fixed points of these involutions are explicitly given as follows.
\begin{itemize}\itemsep=0pt
\item The fixed points of $*_M$ are Hermitian matrices\footnote{This is the quadratic Jordan algebra over $\mathbb{C}$, and this justifies our notation}; we denote the space of hermitian matrices by $J_{2}^{\mathbb{C}}$ and we write a generic element $m$ explicitly as
\begin{gather*}
m=\begin{pmatrix} x_0+x_1 & x_2+ix_3 \\ x_2-ix_3 & x_0-x_1\end{pmatrix}.
\end{gather*}
Given $m,p\in J_{2}^{\mathbb{C}}$, one can define the bilinear form
\begin{gather*}
g(m,p)=\tfrac 1 2 \tr\big( m \left(p-\tr(p)\right)\big),\qquad Q:=g(m,m)=-\det m
\end{gather*}
with signature $(3,1)$. One can then identify Hermitian matrices with the real Minkowski space $\mathcal{M}_{3,1}\sim \mathbb{R}^{3,1}$ that is naturally equipped with the action of the Lorentz group, $H=\rSL_2(\C)\times \rSL_2(\C)$, sitting as a subgroup into the conformal group. $H$ acts on the Minkowski space by
\begin{gather*}
n\rightarrow y n x^{-1}.
\end{gather*}
Requiring that this action maps $J_{2}^{\mathbb{C}}$ into itself forces $x^{-1}=y^\dag$, thus we recover the canonical action of $\rSL_2(\C)$ on Hermitian matrices, establishing the isomorphism $\rSL_2(\C)\sim\textrm{Spin}(3,1)$.

\item The fixed points of $*_E$ are complex matrices $e\in \rM_2(\C)$ satisfying the relations $e^{\dag}=\omega e^t \omega^{-1}$, with $\omega$ being the usual symplectic matrix. We denote this space by $N_{2}^{\mathbb{C}}$, and the generic element will be decomposed as
\begin{gather*}
e=\begin{pmatrix} x_0+ix_1 & x_2+ix_3 \\ -x_2+ix_3 & x_0-ix_1\end{pmatrix}.
\end{gather*}
On the space of those matrices we define the bilinear form
\begin{gather*}
g(e,f)=\tfrac 1 2 \tr \big(e f^{\dag}\big),\qquad Q:=g(e,e)=\det e
\end{gather*}
that yields the Riemannian metric with signature $(4,0) $. We identify the fixed points of~$*_E$ with the real Euclidean space $\mathcal{M}_{4,0}\sim \mathbb{R}^{4,0}$. The isometry group can be again obtained by looking at the action of the subgroup $H$ and requiring it maps ${N}_{2}^{\mathbb{C}}$ into itself. A~tedious calculation shows that in this case one gets $x,y\in \rSU_2$ as one would expect since $\rSU_2\times \rSU_2\sim \textrm{Spin}(4) $.

\item The fixed points of $*_K$ are real matrices, and the generic element $s \in \rM_2(\R)$ will be written, introducing a pair of light cone coordinates, as
\begin{gather*}
s=\begin{pmatrix} x_0+x_1 & x_2+x_3 \\ x_2-x_3 & x_0-x_1\end{pmatrix}.
\end{gather*}
We define the bilinear form
\begin{gather*}
g(s,p)=\tfrac 1 2 \tr \big(s \omega p^{t} \omega^{-1}\big), \qquad Q:= g(s,s)=\det s .
\end{gather*}
In this case, we obtain a pseudo-Riemannian metric with signature $(2,2)$. The isometry group can be again obtained by forcing $H$ to map real matrices into real matrices. Explicitly, one gets that $x$ and $y$ must be real matrices and thus elements of~$\rSL_2(\R)$.

We can easily see that the described action preserves the bilinear form defined above, and this is again not surprising since $\rSL_2(\R)\times \rSL_2(\R)\sim \textrm{Spin}(2,2)$. The proof relies on the fact that for
$g\in \rSL_2(\R)$ it holds that $g^{-1}=\omega g^t \omega^{-1}$.
\end{itemize}

Note that there is a fourth real, non-compact form of $\mathfrak{o}(4;\mathbb{C)}$: the so-called quaternionic one, $(\mathfrak{s)o}^{\ast }(4)\sim \mathfrak{sl}(2,\mathbb{R})\oplus \mathfrak{su}(2)$; it has been recently
treated in \cite{Lukk, Lukk-2} (see also, e.g.,~\cite{Girelli} for recent applications).

If $k=\C_s$, the fixed points of all the three given involutions yield the Klein space, as one can naively check by substituting $j$ to $i$, where $j$ is the imaginary unit of the split complex num\-bers~$\C_s$ ($j^2=1$). At the classical level, one thus obtains four \textit{equivalent} realizations of the $D=4$ Klein space $\mathcal{M}_{2,2}$. Note that, along the way, one proves also the isomorphisms
\begin{gather*}
\rSL_2(\C_s)=\widetilde{\rSU_2}\times \widetilde{\rSU_2}=\textrm{Spin}(2,2),
\end{gather*}
with $\widetilde{\rSU_2}$ being the unitary $2\times 2$ matrices over $\C_s$ with determinant one.

All this is summarized in the following table:
\begin{center}
\begin{tabular}{|c|c|lll|}
\hline
involution&fixed Points over $k=\C,\C_s$&&space&isometry group\\ \hline
&&$\C$ &$\mathcal{M}_{3,1}$&$\rSL_2(\C)$\\
$*_M$&$J_2^k\colon \big\{m\in \rM_2(k)$ s.t.\ $m=m^{\dag}\big\}$&&&\\
&&$\C_s$ &$\mathcal{M}_{2,2}$&$\rSL_2(\C_s)$ \\[1mm] \hline
&&$\C$ &$\mathcal{M}_{4,0}$&$\rSU_2\times \rSU_2$\\
$*_E$& $N_2^k\colon \big\{e\in \rM_2(k)$ s.t.\ $e^{\dag}=\omega e^t \omega^{-1}\big\}$&&&\\
&&$\C_s$ &$\mathcal{M}_{2,2}$&$\widetilde{\rSU_2}\times \widetilde{\rSU_2}$ \\[1mm] \hline
&&$\C$ &$\mathcal{M}_{2,2}$&$\rSL_2(\R)\times \rSL_2(\R)$\\
$*_K$& $\rM_2(\R)$&&&\\
&&$\C_s$ &$\mathcal{M}_{2,2}$&$\rSL_2(\R)\times \rSL_2(\R)$ \\[1mm] \hline
\end{tabular}
\end{center}

\section{Quantum deformations}\label{prelim-sec}

In this section we introduce the quantum deformations of the coordinate rings of the spaces we have studied above. We construct such deformations together with the coactions of the deformations of their symmetry groups, and we also give quantizations of the real forms, which are compatible with such coactions.

\subsection{Quantum groups} \label{qg-subsec}

We now introduce the quantum special linear group, as well as the notion of quantum space.

Let $k=\C,\C_s$ and $k_q=A\big[q,q^{-1}\big]$, where $q$ is an indeterminate. In applications to quantum physics and quantum gravity, the non-zero real parameter $q$ can be thought as related to the Planck constant as $q\sim e^h$; the classical limit is achieved for $h \lra 0$ and $q\lra 1$.

\begin{Definition} \label{qspace-def}
We define \textit{quantum space} $k_q^n$ the non commutative algebra
\begin{gather*}
k_q^n := k_q \langle x_1, \dots , x_n \rangle/\big(x_ix_j - q^{-1} x_ix_j,\, i<j\big),
\end{gather*}
where $k_q \langle x_1, \dots , x_n \rangle$ is the free algebra over the ring $k_q$ with generators $x_1, \dots , x_n$.

We define the \textit{quantum matrix bialgebra} as
\begin{gather*}
\rM_q(n) := k_q \langle a_{ij} \rangle/I_M,
\end{gather*}
where the indeterminates $a_{ij}$ satisfy Manin's relations \cite{ma2} generating the ideal~$I_M$
\begin{alignat}{3}
& a_{ij}a_{kj}=q^{-1}a_{kj}a_{ij}, \quad i<k \qquad && a_{ij}a_{kl}=a_{kl}a_{ij},\quad i<k,j>l \quad \text{or} \quad i>k,j<l,& \nonumber\\
& a_{ij}a_{il}=q^{-1}a_{il}a_{ij}, \quad j<l, \qquad && a_{ij}a_{kl}-a_{kl}a_{ij}=\big(q^{-1}-q\big)a_{ik}a_{jl}, \quad i<k,j<l,& \label{manin-rel2}
\end{alignat}
 with comultiplication and counit given by
\begin{gather*}
\Delta(a_{ij})=\sum_k a_{ik} \otimes a_{kj},\qquad \ep(a_{ij})=\de_{ij}.
\end{gather*}
\end{Definition}

The definitions of the comultiplication $\Delta$ and the counit $\ep$ are summarized with the following notation
\begin{gather*}
\Delta \begin{pmatrix} a_{11} & \dots & a_{1n} \\
\vdots & & \vdots \\ a_{n1} & \dots & a_{nn} \end{pmatrix}
 =
\begin{pmatrix} a_{11} & \dots & a_{1n} \\
\vdots & & \vdots \\ a_{n1} & \dots & a_{nn} \end{pmatrix} \otimes
\begin{pmatrix} a_{11} & \dots & a_{1n} \\
\vdots & & \vdots \\ a_{n1} & \dots & a_{nn} \end{pmatrix},\\
\ep\begin{pmatrix} a_{11} & \dots & a_{1n} \\
\vdots & & \vdots \\ a_{n1} & \dots & a_{nn} \end{pmatrix} =
\begin{pmatrix} 1 & \dots & 0 \\
\vdots & & \vdots \\ 0 & \dots & 1 \end{pmatrix},
\end{gather*}
$k_q^n$ and $\rM_q(n)$ are {\it quantizations} of the module $k^n$ and the matrix algebra $\rM_n(k)$ respectively, that is, when $q=1$, $k_q^n$ becomes the algebra of polynomials with coefficients in $k$ and similarly holds for $\rM_n(k)$. Since $\rM_n(k)$ has a natural action on $k^n$, we have that dually $\rM_q(n)$ coacts on~$k_q^n$ (see \cite{ma1}, and \cite[Chapter~5]{fl} for more details).

\begin{Proposition}\label{qmatrices-coaction}
We have natural left and right coactions of the bialgebra of quantum matrices on the quantum space $k_q^n$
\begin{alignat*}{3}
& \lambda\colon \ k_q^n \lra \rM_q(n) \otimes k_q^n, \qquad && \rho\colon \ k_q^n \lra k_q^n \otimes \rM_q(n), & \\
& \hphantom{\lambda\colon}{} \ x_i \mapsto \sum_j a_{ij} \otimes x_j,\qquad && \hphantom{\rho\colon}{} \ x_i \mapsto \sum_j x_j \otimes a_{ji}. &
\end{alignat*}
\end{Proposition}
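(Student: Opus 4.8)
The plan is to establish the statement in two stages: first, that the assignments prescribed on the generators $x_i$ genuinely extend to algebra homomorphisms out of $k_q^n$; and second, that the resulting maps obey the counit and coassociativity axioms of a coaction. Since $\lambda$ and $\rho$ are specified only on the free-algebra generators, the substance of the first stage is to verify that they are compatible with the defining (quantum plane) relations of $k_q^n$, so that they pass to the quotient. Throughout one equips $\rM_q(n)\otimes k_q^n$ and $k_q^n\otimes\rM_q(n)$ with the usual componentwise tensor-product algebra structure (there is no sign, as no grading is present here).

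I would treat $\lambda$ first, the case of $\rho$ being entirely analogous after interchanging the roles of rows and columns. On generators $\lambda(x_i)=\sum_k a_{ik}\otimes x_k$, so for $i<j$ one finds
\begin{gather*}
\lambda(x_i)\lambda(x_j)=\sum_{k,l} a_{ik}a_{jl}\otimes x_k x_l,\qquad
\lambda(x_j)\lambda(x_i)=\sum_{k,l} a_{jk}a_{il}\otimes x_k x_l,
\end{gather*}
and the point is to check the defining relation of $k_q^n$, namely $\lambda(x_i)\lambda(x_j)=q^{-1}\lambda(x_j)\lambda(x_i)$. I would rewrite both sides in the ordered monomial basis $\{x_k x_l\colon k\le l\}$ using the relations of $k_q^n$ and then match coefficients of each $x_k x_l$. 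The diagonal terms $k=l$ cancel at once via the column relation $a_{ik}a_{jk}=q^{-1}a_{jk}a_{ik}$ of \eqref{manin-rel2} (valid since $i<j$). For $k<l$, collecting the contribution of the pair $(k,l)$ together with that of the reordered pair $(l,k)$ leaves an expression that vanishes precisely by the remaining Manin relations: the cross relation $a_{ik}a_{jl}-a_{jl}a_{ik}=(q^{-1}-q)a_{il}a_{jk}$ combined with the commutation $a_{il}a_{jk}=a_{jk}a_{il}$ of anti-diagonal entries. Turning the quantum-plane reordering into exactly these relations is the main (indeed essentially the only) obstacle; it is pure bookkeeping, but it is where the specific form of Manin's ideal $I_M$ is used, and everything else is formal.

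Once well-definedness is in hand, the coaction axioms follow on generators and extend by multiplicativity. For the counit, $(\ep\otimes\id)\lambda(x_i)=\sum_j\ep(a_{ij})\,x_j=\sum_j\de_{ij}x_j=x_i$ under the identification $k_q\otimes k_q^n\cong k_q^n$. For coassociativity, both $(\Delta\otimes\id)\lambda(x_i)$ and $(\id\otimes\lambda)\lambda(x_i)$ reduce to $\sum_{k,j} a_{ik}\otimes a_{kj}\otimes x_j$ — the former by $\Delta(a_{ij})=\sum_k a_{ik}\otimes a_{kj}$, the latter by the definition of $\lambda$ — so they coincide. The same two computations, with $a_{ji}$ in place of $a_{ij}$ and the column relations of \eqref{manin-rel2} playing the role of the row relations, dispatch the right coaction $\rho$, completing the argument.
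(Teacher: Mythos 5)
Your proposal is correct, but note that it cannot be compared against a proof in the paper, because the paper gives none: Proposition~\ref{qmatrices-coaction} is stated bare, with the remark that since $\rM_n(k)$ acts on $k^n$, ``dually'' $\rM_q(n)$ coacts on $k_q^n$, deferring to \cite{ma1} and \cite[Chapter~5]{fl}. Your direct verification is precisely the standard argument that those references supply, and it is sound. The one computation that carries all the weight does close as you claim: for $i<j$,
\begin{gather*}
\lambda(x_i)\lambda(x_j)-q^{-1}\lambda(x_j)\lambda(x_i)=\sum_{k,l}\big(a_{ik}a_{jl}-q^{-1}a_{jk}a_{il}\big)\otimes x_kx_l,
\end{gather*}
the $k=l$ terms vanish by the column relation $a_{ik}a_{jk}=q^{-1}a_{jk}a_{ik}$, and for $k<l$, after reordering $x_lx_k=qx_kx_l$, the total coefficient of $x_kx_l$ is
\begin{gather*}
\big(a_{ik}a_{jl}-a_{jl}a_{ik}\big)-q^{-1}a_{jk}a_{il}+q\,a_{il}a_{jk}
=\big(q^{-1}-q\big)a_{il}a_{jk}-q^{-1}a_{il}a_{jk}+q\,a_{il}a_{jk}=0,
\end{gather*}
using exactly the two relations you name (the cross relation and commutativity of anti-diagonal entries); the counit and coassociativity checks indeed reduce to generators because every map in sight is an algebra morphism. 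One point you should make explicit rather than silent: the relations as printed in the paper contain two typos. The quantum-plane ideal should be generated by $x_ix_j-q^{-1}x_jx_i$ (not $x_ix_j-q^{-1}x_ix_j$, which is a multiple of $x_ix_j$), and the last relation of~\eqref{manin-rel2} should read $a_{ij}a_{kl}-a_{kl}a_{ij}=\big(q^{-1}-q\big)a_{il}a_{kj}$ rather than $\big(q^{-1}-q\big)a_{ik}a_{jl}$, as confirmed by the paper's own explicit relation $\bt_{32}\bt_{41}=\bt_{41}\bt_{32}+\big(q^{-1}-q\big)\bt_{31}\bt_{42}$ later in Section~\ref{qkm-subsec}. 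Your cross relation $a_{ik}a_{jl}-a_{jl}a_{ik}=\big(q^{-1}-q\big)a_{il}a_{jk}$ is the corrected (standard) form; with the paper's literal relation the coefficient above would not cancel, so stating the correction is part of a complete write-up.
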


Now we turn to the special linear group $\rSL_n(k)$. The algebra of polynomials with coefficients in $k$ on the special linear group is obtained from the matrix bialgebra by imposing the condition that the determinant is equal to $1$. We consider the \textit{quantum exterior algebra in $n$ variables}
\begin{gather*}
\wedge^n_q := k_q \langle \chi_1, \dots , \chi_n \rangle/\big(\chi_i\chi_j + q \chi_j \chi_i, \chi_i^2, \, i<j\big).
\end{gather*}
Also $\wedge^n_q$ admits a (right) coaction of $\rM_q(n)$, using the transpose as one expects
\begin{gather*}
r\colon \ \wedge^n_q \lra \wedge^n_q\otimes \rM_q(n), \qquad \chi_i \mapsto \sum_j a_{ij} \otimes \chi_j.
\end{gather*}

\begin{Definition} \label{qdet-def}
We define \textit{quantum determinant}, the element in $\rM_q(n)$ $\ddet_q(a_{ij})$ (or $\ddet_q$ for short) defined by the equation
\begin{gather*}
r(\chi_1 \cdots \chi_n)=\ddet_q(a_{ij}) \otimes \chi_1 \cdots \chi_n.
\end{gather*}
\end{Definition}

A small calculation gives
\begin{gather*}
\ddet_q(a_{ij})=\sum_\sigma (-q)^{-\ell(\sigma)}a_{1 \sigma(1)} \cdots a_{n \sigma(n)} =\sum_\sigma (-q)^{-\ell(\sigma)}a_{\sigma(1)1} \cdots a_{\sigma(n)n},
\end{gather*}
where $\sigma$ runs through all the permutations of the first $n$ integers.

Furthermore $\det_q$ is central, i.e., commutes with all the elements in $\rM_q(n)$ and it is a group-like element, that is
\begin{gather*}
\Delta(\ddet_q)=\ddet_q \otimes \ddet_q.
\end{gather*}

\begin{Definition}
We define \textit{quantum special linear group} over $k$ the algebra
\begin{gather*}
\rSL_q(n)= \rM_q(n)/(\ddet_q-1).
\end{gather*}
\end{Definition}
Notice that, most immediately, $\rSL_q(n)$ is a quantum deformation of the special linear group $\rSL_n(k)$.

$\rSL_q(n)$ is an Hopf algebra, with $\Delta$, $\epsilon$ inherited by $\rM_q(n)$ and antipode given by
\begin{gather*}
S_q(a_{ij})=(-q)^{j-i}A_{ji},
\end{gather*}
where $A_{ji}$ is the quantum determinant in the quantum matrix bialgebra generated by the indeterminates $a_{rs}$ with $r,s =1 , \dots , n$ and $r \neq j$, $s \neq i$.

The following definition will be useful later.

\begin{Definition}We say that a certain set of indeterminates $\{a_{ij}\}$ is a \textit{quantum matrix} if the~$a_{ij}$'s satisfy Manin's relations~(\ref{manin-rel2}). We also speak of a~{\it $q^{-1}$-quantum matrix}, meaning that it satisfies Manin's relations~(\ref{manin-rel2}) with $q$ replaced by $q^{-1}$.
\end{Definition}

\subsection{The quantum spaces}\label{qkm-subsec}

We want to quantize the setting introduced in Sections~\ref{cs-sec} and~\ref{cm-subsec}. We proceed heuristically imposing the following equality (see also \cite[Chapter~5]{fl})
\begin{gather} \label{eur-der}
\begin{pmatrix} x & 0 \\ tx & y \end{pmatrix}
\begin{pmatrix}I_2 & s \\ 0 & I_2\end{pmatrix}
=\begin{pmatrix}a_{11} & \dots & a_{14} \\
 \vdots & & \vdots \\
 a_{41} & \dots & a_{44}\end{pmatrix},
\end{gather}
which will yield the variables $x$, $y$, $t$ in terms of the coordinates $a_{ij}$ of $\rSL_q(n)$ (for the physical meaning of $x$, $y$, $t$ refer to Section~\ref{cm-subsec}). Equation~(\ref{eur-der}) holds only when $D_{12}^{12}$ is invertible. After a small calculation we get the equalities
\begin{gather*}
x=(x_{ij})=\begin{pmatrix}a_{11} & a_{12} \\ a_{21} & a_{22}\end{pmatrix} , \qquad
t=(\bt_{kj})=\begin{pmatrix}-q^{-1}D_{23} {D_{12} }^{-1} &
D_{13} {D_{12} }^{-1} \\
-q^{-1}D_{24} {D_{12} }^{-1} &
D_{14} {D_{12} }^{-1} \end{pmatrix},\\
s=\begin{pmatrix}-q{D_{12} }^{-1}D_{12}^{23} &
-q{D_{12} }^{-1}D_{12}^{24} \\
{D_{12} }^{-1}D_{12}^{13} &
{D_{12} }^{-1}D_{12}^{14} \end{pmatrix},\qquad
y=(y_{kl})=\begin{pmatrix}D_{123}^{123}{D_{12} }^{-1} &
D_{124}^{123}{D_{12} }^{-1} \\
D_{123}^{124}{D_{12} }^{-1} &
D_{124}^{124}{D_{12} }^{-1} \end{pmatrix},
\end{gather*}
where $1 \leq i,j \leq 2$, $3 \leq k,l \leq 4$ and $D^{j_1 \dots j_r}_{i_1 \dots i_r}$ is the quantum determinant obtained by taking the rows $i_1, \dots , i_r$ and columns $j_1, \dots ,j_r$ (we may omit the column index, when we take the first $r$ columns).

The entries of the matrix $t$ correspond to the generators of the quantum Minkowski space: in fact, we identify the Minkowski space with the translational part of the Poincar\'{e} group as it is classically expressed in formula~(\ref{poincare}). For the time being, we write the elements of $t$, $x$, $y$, $s$ in a matrix form, for convenience. We observe that $t$ is not a quantum matrix, but it is close to it; we in fact obtain a quantum matrix by exchanging the two columns, i.e.,
$\left(\begin{smallmatrix}\bt_{32} & \bt_{31} \\ \bt_{42} & \bt_{41}\end{smallmatrix}\right)$ is a quantum matrix. In details we have
\begin{gather*}
\bt_{32}\bt_{31} = q^{-1} \bt_{31}\bt_{32},\qquad \bt_{32}\bt_{42} = q^{-1}\bt_{42}\bt_{32},\qquad \bt_{31}\bt_{41} = q^{-1}\bt_{41}\bt_{31},\qquad
\bt_{42}\bt_{41} = q^{-1} \bt_{41}\bt_{42},\\
\bt_{32}\bt_{41} = \bt_{41}\bt_{32}+\big(q^{-1}-q\big)\bt_{31}\bt_{42},\qquad
\bt_{31}\bt_{42} = \bt_{42}\bt_{31}.
\end{gather*}

This heuristic reasoning leads to the following definition in analogy with the classical setting.

\begin{Definition} We define \textit{quantum Poincar\'{e} group} $\mathcal{P}_{q}$ as the algebra generated inside $\rSL_q(4)$ by the elements in $x$, $y$, $t$ described above. This is a quantum deformation of the Poincar\'{e} group. We define \textit{quantum Minkowski space} the subring $\mathcal{M}_{4}^{k,q}$ in $\rSL_q(4)$ generated by the elements $\bt_{kj}$. This is a quantum deformation of the coordinate algebra of the translations.

We define the \textit{quantum Lorentz group} as
\begin{gather*}
\mathcal{L}_{q}=\mathcal{P}_{q}/(t, {\det}_q(x)-1, {\det}_q(y)-1).
\end{gather*}
\end{Definition}

Our definition is very natural and in fact gives us a coaction of the quantum Poincar\'{e} group on the Minkowski space.

\begin{Proposition} \label{coaction-min}The quantum Minkowski space admits a natural coaction of the quantum Poincar\'{e} group
\begin{align*}
\de\colon \ \mathcal{M}_{4}^{k,q} & \longrightarrow \mathcal{P}_{q} \otimes \mathcal{M}_{4}^{k,q}, \nonumber \\
 \bt_{ij} & \longmapsto \sum_{s,r} y_{is}S(x_{rj}) \otimes \bt_{sr} +t_{ij} \otimes 1,
\end{align*}
where we rescale all indices $i,j,r,s=1,2$.
\end{Proposition}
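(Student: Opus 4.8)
The plan is to establish the two properties that make $\delta$ a coaction of the bialgebra $\mathcal{P}_q$ on the algebra $\mathcal{M}_4^{k,q}$: that $\delta$ is a well-defined homomorphism of algebras (so that it respects the relations among the generators $\bt_{ij}$), and that it obeys the counit and coassociativity axioms $(\epsilon\otimes\id)\delta=\id$ and $(\Delta\otimes\id)\delta=(\id\otimes\delta)\delta$. The formula should be read as the quantum transcription of the classical affine action $n\mapsto ynx^{-1}+t$ of~(\ref{poincare}): the homogeneous part encodes $ynx^{-1}$ with the inverse $(x^{-1})_{rj}$ replaced by the antipode $S(x_{rj})$ of $\rSL_q(4)$, and the term $t_{ij}\otimes 1$ is the translation. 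Throughout I use that $\bt_{ij}$ and $t_{ij}$ denote the same translation generators, viewed respectively in $\mathcal{M}_4^{k,q}$ and inside $\mathcal{P}_q$.

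The cleanest route is to recognise $\delta$ as the comultiplication of the quantum Poincar\'e group restricted to the Minkowski subalgebra. The quantum Poincar\'e group law has the block-triangular form $\Delta(x_{ij})=\sum_k x_{ik}\otimes x_{kj}$, $\Delta(y_{ij})=\sum_k y_{ik}\otimes y_{kj}$ and, decisively, $\Delta(t_{ij})=t_{ij}\otimes 1+\sum_{s,r}y_{is}S(x_{rj})\otimes t_{sr}$. Granting this, $\delta$ is exactly $\Delta$ restricted to $\mathcal{M}_4^{k,q}$, and its image already lies in $\mathcal{P}_q\otimes\mathcal{M}_4^{k,q}$ because the second tensor slot only ever produces $1$ or the generators $t_{sr}$. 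Consequently $\delta$ is an algebra map (being a restriction of the algebra map $\Delta$), and the counit and coassociativity axioms for $\delta$ are inherited by restriction from those of $\Delta$. The counit identity is in any case immediate from $\epsilon(t_{ij})=0$, $\epsilon(y_{is})=\delta_{is}$ and $\epsilon(S(x_{rj}))=\epsilon(x_{rj})=\delta_{rj}$, which give $(\epsilon\otimes\id)\delta(\bt_{ij})=\sum_{s,r}\delta_{is}\delta_{rj}\,\bt_{sr}=\bt_{ij}$.

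I would then carry out the one substantive input on which this reduction rests, namely the coproduct formula $\Delta(t_{ij})=t_{ij}\otimes 1+\sum_{s,r}y_{is}S(x_{rj})\otimes t_{sr}$. Since $t$, $x$, $y$ are built from the quantum minors $D^J_I$ together with an inverse of $D_{12}$, this reduces to applying the quantum Laplace expansion $\Delta(D^J_I)=\sum_K D^K_I\otimes D^J_K$ to each minor occurring in $t_{ij}$, combined with $\Delta(D_{12}^{-1})=D_{12}^{-1}\otimes D_{12}^{-1}$. The simplification that makes this feasible is that on the quantum Poincar\'e group the upper-right $2\times 2$ block vanishes, so the many minors containing its entries drop out, $\det_q(x)=D_{12}$ becomes group-like, and the surviving terms reassemble precisely into $y_{is}S(x_{rj})$. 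As a consistency check one matches $(\Delta\otimes\id)\delta$ against $(\id\otimes\delta)\delta$ directly: using $\Delta(y_{is})=\sum_u y_{iu}\otimes y_{us}$, the order-reversing antipode $\Delta(S(x_{rj}))=\sum_v S(x_{vj})\otimes S(x_{rv})$ and the formula for $\Delta(t_{ij})$, the cubic terms $\sum y_{iu}S(x_{vj})\otimes y_{us}S(x_{rv})\otimes\bt_{sr}$ coincide after relabelling and the remaining terms pair off.

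The main obstacle is this coproduct computation for $t_{ij}$, equivalently the pinning down of the quantum Poincar\'e bialgebra structure, since it is where the quantum-minor identities and the order reversal introduced by the antipode must be handled carefully; once it is secured, the comodule axioms become purely formal. If one prefers to avoid invoking the $\mathcal{P}_q$ coproduct, the alternative is a direct check that the six relations defining $\mathcal{M}_4^{k,q}$, those making the column-swapped matrix $\left(\begin{smallmatrix}\bt_{32}&\bt_{31}\\\bt_{42}&\bt_{41}\end{smallmatrix}\right)$ a quantum matrix, are preserved by $\delta$. Here the homogeneous part acts as a left coaction by the quantum matrix $(y_{is})$ followed by a right coaction by the inverse quantum matrix $(S(x_{rj}))$, in the spirit of Proposition~\ref{qmatrices-coaction}, so it preserves the relations automatically; what remains is to control the translation term $t_{ij}\otimes 1$ and the cross commutators between the blocks $x$, $y$, $t$ through the Manin relations~(\ref{manin-rel2}) of $\rSL_q(4)$. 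This bookkeeping, rather than any conceptual difficulty, is the real labour of the direct route.
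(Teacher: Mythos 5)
Your proposal is correct, and its backbone coincides with the paper's proof: there too, $\de$ is obtained as the restriction of the comultiplication, and the algebra-map property and comodule axioms are taken as automatic from that identification. The genuine difference lies in how the coproduct formula for $t_{ij}$ is derived. The paper avoids quantum minors altogether: by~(\ref{eur-der}) the lower-left block of the generic matrix is $tx$ --- its entries are the generators $a_{31},a_{32},a_{41},a_{42}$ themselves --- so the block lower-triangular form of the Poincar\'e matrix yields $\Delta(tx)=tx\otimes x+y\otimes tx$ in one line, and multiplying by $S(x)\otimes S(x)$ (i.e., writing $t=(tx)S(x)$ and using $\sum_k x_{ik}S(x_{kj})=\de_{ij}$ together with the anti-coalgebra property of $S$) gives the result; this is precisely the manipulation you relegate to your coassociativity cross-check. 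Your route via the Laplace expansion $\Delta\big(D^J_I\big)=\sum_K D^K_I\otimes D^J_K$ is viable and rests on the same structural input: the vanishing upper-right block makes $D_{12}$ group-like and collapses the remaining first-slot minors either to zero or to monomials (an $x$-entry times a $y$-entry), and the reassembly you anticipate is then immediate, because the surviving second-slot minors $D_{13},D_{14},D_{23},D_{24}$ are, by the very definition of $t$, the elements $\bt_{sr}D_{12}$ up to powers of $-q$, while the first-slot monomials acquire the factor $D_{12}^{-1}$ that turns them into $y_{is}S(x_{rj})$ (in $\mathcal{P}_q$ the antipode of the $x$-block carries $\det_q(x)^{-1}=D_{12}^{-1}$). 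So both arguments use the same two ingredients --- block triangularity and the quantum inverse of $x$ --- but the paper's matrix-block notation compresses the minor bookkeeping to two lines, whereas your version spells it out; conversely, your explicit counit/coassociativity verification and the fallback check of the Manin relations are safety nets the paper omits as automatic consequences of the restriction picture.
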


\begin{proof} This is a natural consequence of our construction. In fact the coaction $\delta$ corresponds to the restriction of the comultiplication $\Delta$ of $\rSL_q(4)$. In particular we have
\begin{gather*}
\Delta \begin{pmatrix} x & 0 \\ tx & y \end{pmatrix}=\begin{pmatrix} x & 0 \\ tx & y \end{pmatrix}\otimes \begin{pmatrix} x & 0 \\ tx & y \end{pmatrix},
\end{gather*}
implying
\begin{gather*}
\Delta(tx)=tx\otimes x+y\otimes tx.
\end{gather*}
Multiplying then by $S(x)\otimes S(x)$ on both sides one gets the result. Notice that in our notation we identify the translation matrix $t_{ij}$ with the coordinates on the Minkowski space $\bt_{ij}$.
\end{proof}

This gives also a coaction of the quantum Lorentz group by setting the generators $t=0$.

We have then that the quantum Minkowski space is a quantum homogeneous space for the quantum Poincar\'e group and the quantum Lorentz group see also \cite{fl} for more details on quantum homogeneous spaces in general.

\begin{Remark} We consider two quantum planes $\C_q[\chi_1,\chi_2]$ and $\C_{q}[\psi_1,\psi_2]$, that is $\chi_2\chi_1=q \chi_1\chi_2$ and $\psi_2\psi_1=q \psi_1\psi_2$,
then one has the following morphism
\begin{align*}
\C_q^2\stackrel{\ell}{\oplus}
\C_q^2 & \lra \mathcal{M}^{\ell,q}_{\mathbb{C}}, \\
\begin{pmatrix} \chi_1\\\chi_2 \end{pmatrix}\otimes \begin{pmatrix}
\psi_2 & -q\psi_1 \end{pmatrix}
& \mapsto \begin{pmatrix} t_{11} & t_{12} \\
t_{21} & t_{22}
\end{pmatrix}.
\end{align*}
Thus, in analogy with the classical case, we can view the quantum plane as quantum spinors.
\end{Remark}

\section{Real forms of the quantum Minkowski space}\label{real-sec}
In this section we want to construct real forms of the complex quantum Minkowski space $\mathcal{M}_{4}^{k,q}$ as quantum homogeneous space for the quantum Lorentz group. In other words, we want real forms, which are compatible with the coaction $\de$ of Proposition~\ref{coaction-min}, setting to zero the translation part in~$\mathcal{P}_{q}$.
As usual $k=\C$ or $\C_s$ and we shall specify which case, depending on the real form under consideration.

\subsection{Quantum real forms and involutions}
In the quantum setting, real forms correspond to involution of quantum algebras. In particular, if $G_q$ is a quantum group, a real form of $G_q$ is a pair $(G_q,*_q)$, where $*_q$ is an antilinear, involutive, antimultiplicative map respecting the comultiplication and the antipode (see, e.g., \cite[Section~5.3]{fl} for more details). For a quantum homogeneous space, we give the following definition of real form.

\begin{Definition}
Let $V_q$ be a quantum homogeneous space for the quantum group $G_q$, with coaction $\de_{V_q}\colon V_q\lra G_q \otimes V_q$. Assume $(G_q,*_{G_q})$ is a~real form of $G_q$. We say that $(V_q,*_{V_q})$ is a~\textit{real form as quantum homogeneous space} of $V_q$ if

\begin{itemize}\itemsep=0pt
\item $*_{V_q}\colon V_q \lra V_q$ is an involutive antilinear map,
\item $ *_{V_q}$ is antimultiplicative, that is $(ab)^{*_{V_q}}=b^{*_{V_q}}a^{*_{V_q}}$, or multiplicative $(ab)^{*_{V_q}}=a^{*_{V_q}}b^{*_{V_q}}$,
\item $*_{V_q}$ preserves the coaction, that is
\begin{gather*}
\de_{V_q}(a^{*_{V_q}})=\de_{V_q}(a)^{*_{G_q} \times *_{V_q}}.
\end{gather*}
\end{itemize}
\end{Definition}

\begin{Remark}In the following, the compatibility condition with the coaction of the real form of~$G_q$ discussed above, will force us to consider also involution for $G_q$ and $V_q$ that are not antimultiplicative\footnote{In literature, multiplicative involution for quantum groups have been already considered, see for example~\cite{BGST}.}; this is not crucial, since we are not interested in studying particular
representations of the real form of $G_q$, but we are interested just on its action on $V_q$.
\end{Remark}
Inspired by the classical case (treated in Section~\ref{cm-subsec}), we now consider three different involutions on $\mathcal{M}_{4}^{k,q}$. As we shall see later, they are compatible with a suitable real form of the complex
Poincar\'{e} quantum group, thus consistently realizing real forms of~$\mathcal{M}_{4}^{k,q}$ as quantum homogeneous space for the suitable quantum group of symmetries:{\samepage
\begin{align*}
\bullet \ *_{_{Mq}}\colon \ &
\mathcal{M}_{4}^{k,q}  \longrightarrow \mathcal{M}_{4}^{k,q}, \\
& \begin{pmatrix}\bt_{31} & \bt_{32} \\ \bt_{41} & \bt_{42} \end{pmatrix}
  \longmapsto \begin{pmatrix}\bt_{31} & \bt_{41} \\ \bt_{32} & \bt_{42} \end{pmatrix},\\
& q \longmapsto q;\\
\bullet \ *_{_{Eq}} \colon \ &
 \mathcal{M}_{4}^{k,q}  \longrightarrow \mathcal{M}_{4}^{k,q}, \\
& \begin{pmatrix}\bt_{31} & \bt_{32} \\ \bt_{41} & \bt_{42} \end{pmatrix}
\longmapsto \begin{pmatrix}-q^{-1}\bt_{42} & \bt_{41} \\\bt_{32} & -q\bt_{31} \end{pmatrix},\\
& q \longmapsto q;\\
\bullet \ *_{_{Kq}}\colon \ &
\mathcal{M}_{4}^{k,q}  \longrightarrow \mathcal{M}_{4}^{k,q}, \\
& \begin{pmatrix}\bt_{31} & \bt_{32} \\ \bt_{41} & \bt_{42} \end{pmatrix}
\longmapsto \begin{pmatrix}\bt_{31} & \bt_{32} \\ \bt_{41} & \bt_{42}
\end{pmatrix},\\
& q \longmapsto q.
\end{align*}}

Notice that while $*_{Mq}$ and $*_{Eq}$ are antimultiplicative involutions of $k_q[\bt_{ij}]/I_M$ (with $I_M$ denoting the ideal of Manin's relations~(\ref{manin-rel2})) fixing $k_q$, $*_{Kq}$ gives a multiplicative involution of $k_q$; the antimultiplicative property could be recovered sending $q$ to $q^{-1}$. Let us stress once more that $\left(\begin{smallmatrix}{\bt}_{31} & {\bt}_{32} \\ {\bt}_{41} & {\bt}_{42} \end{smallmatrix}\right)$ is not a quantum matrix (it becomes such by interchanging the first and the second columns). This has clearly a consequence, when computing the commutation relations and the quantum determinant, however we shall pay attention and proceed nevertheless with this notation. We now proceed analyzing case by case the quantum structure for both~$k=\C$ or~$\C_s$.

\subsection{The quantum real Minkowski space}\label{qmin-subsec}

We consider now fixed points of $\mathcal{M}_{4}^{\mathbb{C},q}$ with respect to $*_{_{Mq}}$, i.e., the real form $\big(\mathcal{M}_{4}^{\mathbb{C},q},*_{_{Mq}}\big)$. In analogy with the classical case, we parametrize it by the following ``real variables''
\begin{gather*}
\widetilde{\bx}_{0}=\tfrac 12(\bt_{31}+\bt_{42}),\qquad \widetilde{\bx}_2=\tfrac 12(\bt_{32}+\bt_{41}),\qquad
\widetilde{\bx}_3= \tfrac i {2} (\bt_{41}-\bt_{32}),\qquad \widetilde{\bx}_1=\tfrac 12(\bt_{31}-\bt_{42}), \end{gather*}
that are the following fixed points with respect the action of $*_{_{Mq}}$; their commutation relations are listed below
\begin{gather*}
\widetilde{\bx}_2\widetilde{\bx}_0 = q_+\widetilde{\bx}_0\widetilde{\bx}_2+iq_-\widetilde{\bx}_0\widetilde{\bx}_3,\\
\widetilde{\bx}_2\widetilde{\bx}_1 = q_+\widetilde{\bx}_1\widetilde{\bx}_2+iq_-\widetilde{\bx}_1\widetilde{\bx}_3,\\
\widetilde{\bx}_3\widetilde{\bx}_0 = q_+\widetilde{\bx}_0\widetilde{\bx}_3-iq_-\widetilde{\bx}_0\widetilde{\bx}_2,\\
\widetilde{\bx}_3\widetilde{\bx}_1 = q_+\widetilde{\bx}_1\widetilde{\bx}_3-iq_-\widetilde{\bx}_1\widetilde{\bx}_2,\\
\widetilde{\bx}_0\widetilde{\bx}_1 = \widetilde{\bx}_1\widetilde{\bx}_0,\\
\widetilde{\bx}_2\widetilde{\bx}_3 = \widetilde{\bx}_3\widetilde{\bx}_2+iq_-\big(\widetilde{\bx}_0^2-\widetilde{\bx}_1^2\big),
\end{gather*}
with $q_+=\frac 1 2\big(q^{-1}+q\big)$ and $q_-=\frac 1 2\big(q^{-1}-q\big)$. The metric is given by
\begin{gather*}
Q_q=\det\nolimits_q\begin{pmatrix}\bt_{32}&\bt_{31}\\\bt_{42}&\bt_{41}\end{pmatrix}=\widetilde{\bx}_2^2+\widetilde{\bx}_3^2+q_+\big(\widetilde{\bx}_1^2-\widetilde{\bx}^2_0\big).
\end{gather*}

\subsection{The quantum Klein space}
We want to replicate this construction for the Klein space. This is the most interesting case since, as we already observe, we have 4 different ways to realize it.
\begin{itemize}\itemsep=0pt
 \item We start with the real form $\big(\mathcal{M}_{4}^{k,q},*_{_{Kq}}\big)$; in this case we can work with $\C$ and $\Cs$ at the same time. We introduce the following real variables
\begin{gather*} \bx_{0}=\tfrac 12(\bt_{31}+\bt_{42}), \!\!\qquad \bx_2=\tfrac 12(\bt_{32}+\bt_{41}),\!\!\qquad
\bx_3=\tfrac 1 {2} (\bt_{41}-\bt_{32}), \!\!\qquad \bx_1=\tfrac{1}{2}(\bt_{31}-\bt_{42}),\! \end{gather*}
with commutation relations:
\begin{gather*}
\bx_2\bx_0=q_+\bx_0\bx_2-q_-\bx_0\bx_3,\\
\bx_2\bx_1=q_+\bx_1\bx_2-q_-\bx_1\bx_3,\\
\bx_3\bx_0=q_+\bx_0\bx_3-q_-\bx_0\bx_2,\\
\bx_3\bx_1=q_+\bx_1\bx_3-q_-\bx_1\bx_2,\\
\bx_0\bx_1=\bx_1\bx_0,\\
\bx_2\bx_3=\bx_3\bx_2+q_-\big(\bx_0^2-\bx_1^2\big),
\end{gather*}

The quantum determinant now yields
\begin{gather*}
Q_q=\det\nolimits_q\begin{pmatrix}\bt_{32}&\bt_{31}\\\bt_{42}&\bt_{41}\end{pmatrix}=\bx_2^2-\bx_3^2+q_+\big(\bx_1^2-\bx^2_0\big). 
\end{gather*}
As expected, we get the deformation of a split signature norm.
\end{itemize}

We analyze now the other 2 options one has working within the algebra $\C_s$ (refer to \cite{flm}). In this case we speak of a $\C_s$ quantum deformation.
\begin{itemize}\itemsep=0pt
\item Fixed points of $*_{Mq}$ on $\mathcal{M}_{4}^{\mathbb{\C}_s,q}$ are given by the fixed points
 \begin{gather*}\by_{0}=\tfrac 12(\bt_{31}+\bt_{42}),\!\!\qquad \by_2=\tfrac 12(\bt_{32}+\bt_{41}),\!\!\qquad \by_3=
\tfrac j {2} (\bt_{41}-\bt_{32}), \!\!\qquad \by_1=\tfrac 12(\bt_{31}-\bt_{42}),\end{gather*}
whose commutations relations can be easily obtained by the previous one and the norm turns out to be
\begin{gather*}
Q_q=\det\nolimits_q\begin{pmatrix}\bt_{32}&\bt_{31}\\\bt_{42}&\bt_{41}\end{pmatrix}=\by_2^2-\by_3^2+q_+\big(\by_1^2-\by^2_0\big). 
\end{gather*}
Note that this real form of the complex Minkowski space is isomorphic to the previous one by the simple replacement $\by_3\rightarrow j\by_3$.

\item In the end we can also consider the real form $\big(\mathcal{M}_{4}^{\mathbb{C}_s,q},*_{_{Eq}}\big)$; in this case a convenient choice of fixed points is given by
 \begin{gather*}\tilde{\by}_{0}=\tfrac j2\big(q^{\frac 1 2}\bt_{31}+q^{-\frac 1 2}\bt_{42}\big),\qquad
\tilde{\by}_2=\tfrac 12(\bt_{32}+\bt_{41}),\qquad \tilde{\by}_3= \tfrac j {2} (\bt_{41}-\bt_{32}), \\ \tilde{\by}_1=\tfrac 12\big(q^{\frac 1 2}\bt_{31}-q^{-\frac 1 2}\bt_{42}\big),\end{gather*}
and the metric, as well as the commutation relations coincide, formally, with the previous one
\end{itemize}
We have thus obtained four isomorphic constructions of the $D=4$ quantum Klein space $\mathcal{M}_{2,2}^{q}$. These correspond to the fixed points of the involutions $\ast _{Mq}$, $\ast _{Eq}$ on $\mathcal{M}_{4}^{\mathbb{C}_{s},q}$, and of $\ast _{Kq}$ on $\mathcal{M}_{4}^{k,q}$.

\subsection{The quantum Klein group}\label{qkp-subsec}

In the classical setting we have that the metric $Q$ is preserved by the action of the Klein group; it is natural then to expect that its quantum deformation $Q_q$, introduced in the previous section is preserved under the action of the \textit{quantum Klein group}, viewed as a suitable real form of the quantum Lorentz group $\mathcal{L}_q$. We now work for convenience with $\C_s$ only and we suppress the subscript $k$. We can define three antilinear involutions (i.e., real forms) on the quantum Lorentz group (for the first one, see~\cite{fl}, as for the other two, \cite[pp.~102 and 316]{ks}):
\begin{align*}
\bullet \ *_{PM}\colon \ &
 \mathcal{L}_q  \longrightarrow \mathcal{L}_q ,\\
& x_{ij} \longmapsto S(y_{ji}), \\
& y_{ij} \longmapsto S(x_{ji}), \\
& q \longmapsto q;\\
\bullet \ *_{PE} \colon \ &
\mathcal{L}_q  \longrightarrow \mathcal{L}_q, \\
& x_{ij} \longmapsto S(x_{ji}), \\
& y_{ij} \longmapsto S(y_{ji}), \\
& q \longmapsto q ;\\
\bullet \  *_{PK}\colon \ &
\mathcal{L}_{q}  \longrightarrow \mathcal{L}_{q}, \\
& x_{ij} \longmapsto x_{ij}, \\
& y_{ij} \longmapsto y_{ij}, \\
& q \longmapsto q,
\end{align*}
where $S(x_{ij})$ denotes the antipode for the quantum matrix $x_{ij}$, that is
\begin{gather*}
S\begin{pmatrix} x_{11} & x_{12} \\ x_{21} & x_{22} \end{pmatrix} =
\begin{pmatrix} x_{22} & -qx_{12} \\
-q^{-1}x_{21} & x_{11} \end{pmatrix}
\end{gather*}
and similarly for $S(y_{ij})$. Observe that $ *_{PK}$ is multiplicative and not antimultiplicative; this property could be again restored by sending $q$ to $q^{-1}$. As one can readily check, these maps are well defined and give a $*$-structure on $\mathcal{L}_{q}$.

We have the following proposition, whose proof consists of a tedious direct calculation.

\begin{Proposition}\quad
\begin{enumerate} \itemsep=0pt
\item[$1.$] The quantum Lorentz group acts on the quantum Minkowski space as follows
\begin{align*}
\lambda_{L}\colon \ \mathcal{M}_{4}^{q} & \lra \mathcal{L}_{q}\otimes \mathcal{M}_{4}^q, \\
 \bt_{ij} & \longmapsto \sum_{s,r} y_{is}S(x_{rj}) \otimes \bt_{sr}.
\end{align*}

\item[$2.$] The maps $*_{PA} $, with $A=M,E,K$ define a $*$-structure and a the real form of $\mathcal{L}_{q}$. Furthermore, the Klein space is quantum homogeneous with respect to the coaction of the quantum Klein group. In other words we have
\begin{gather*}
(*_{PA} \times *_{Aq}) \circ \lambda_L=\lambda_L \circ *_{Aq}.
\end{gather*}

\item[$3.$] The quantum Klein quadratic form is a coinvariant for the action of the quantum Lorentz group
\begin{gather*}
\lambda_L(Q_q)=1 \otimes Q_q.
\end{gather*}
\end{enumerate}
\end{Proposition}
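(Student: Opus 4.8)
The plan is to treat the three items in turn, reducing everything to checks on the generators $\bt_{ij}$ and to the multiplicativity of the quantum determinant. Item~1 is essentially already in hand: the coaction $\lambda_L$ is obtained from the coaction $\de$ of the quantum Poincar\'e group in Proposition~\ref{coaction-min} by passing to the Lorentz quotient, i.e., by setting the translation generators $t=0$. Concretely, $\lambda_L$ is the restriction of the comultiplication $\Delta$ of $\rSL_q(4)$ to the subalgebra generated by the $\bt_{ij}$, landing in $\mathcal{L}_q\otimes\mathcal{M}_4^q$; that it is a coaction (coassociative and counital) is inherited from $\Delta$ being a coalgebra structure, so nothing is needed beyond recording the formula.

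For item~2 I would first dispatch the claim that each $*_{PA}$ is a real form of $\mathcal{L}_q$. Antilinearity holds by construction (conjugation of the base scalar, with $q\mapsto q$); involutivity, the (anti)multiplicativity, and compatibility with $\Delta$ and $S$ are finite checks on the generators $x_{ij},y_{ij}$, using $*_{PA}\circ S=S^{-1}\circ *_{PA}$ for the antimultiplicative cases $A=M,E$ and $*_{PK}\circ S=S\circ *_{PK}$ for the multiplicative case $A=K$; preservation of the relations (Manin's relations and $\det\nolimits_q=1$) is exactly what makes the maps well defined, as in~\cite{fl,ks}. The intertwining identity $(*_{PA}\times *_{Aq})\circ\lambda_L=\lambda_L\circ *_{Aq}$ is then verified on $\bt_{ij}$. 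The Klein case is the cleanest: since $*_{Kq}$ fixes every $\bt_{ij}$ and $*_{PK}$ fixes every $x_{ij},y_{ij}$, is multiplicative, and commutes with $S$, both sides equal $\sum_{s,r}y_{is}S(x_{rj})\otimes\bt_{sr}$, the absence of explicit $j$ in the structure constants (which involve only $q$) ensuring that antilinearity acts trivially. For $A=M,E$ one substitutes the explicit image of $\bt_{ij}$ under $*_{Aq}$ on the left, expands $\lambda_L$, and on the right uses antimultiplicativity together with $*_{PA}\circ S=S^{-1}\circ *_{PA}$ and the swap $x\leftrightarrow y$ (for $*_{PM}$) to match the two expressions term by term.

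For item~3, write $Q_q=\det\nolimits_q\hat T$ where $\hat T=\left(\begin{smallmatrix}\bt_{32}&\bt_{31}\\\bt_{42}&\bt_{41}\end{smallmatrix}\right)$ is the genuine quantum matrix obtained by swapping the columns of $(\bt_{ij})$. Since $\lambda_L$ is an algebra homomorphism, $\lambda_L(Q_q)$ equals the same quantum-determinant polynomial evaluated on $\lambda_L(\hat T)$, and in matrix form $\lambda_L(\hat T)$ factors as a product $P\,(1\otimes\hat T)\,R$ of matrices over $\mathcal{L}_q\otimes\mathcal{M}_4^q$ built from $(y_{is}\otimes 1)$ and $(S(x_{rj})\otimes 1)$. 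Crucially, the $\mathcal{L}_q$-entries (of the form $\ell\otimes 1$) commute with the $\mathcal{M}_4^q$-entries (of the form $1\otimes m$) inside the tensor product, so the multiplicativity of the quantum determinant applies and $\lambda_L(Q_q)=\det\nolimits_q(P)\,(1\otimes Q_q)\,\det\nolimits_q(R)$. Finally $\det\nolimits_q(P)=\det\nolimits_q(R)=1$ in $\mathcal{L}_q$, because $\det\nolimits_q(y)=\det\nolimits_q(x)=1$ there and the quantum determinant of the antipode matrix $S(x)$ is $\det\nolimits_q(x)^{-1}=1$, giving $\lambda_L(Q_q)=1\otimes Q_q$.

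I expect the main obstacles to be the bookkeeping in items~2 and~3 rather than any conceptual difficulty. In item~3 the column swap turning $(\bt_{ij})$ into the quantum matrix $\hat T$, together with the fact that $S(x)$ is naturally a $q^{-1}$-quantum matrix rather than a $q$-quantum matrix, forces one to track carefully which factors are $q$- versus $q^{-1}$-quantum matrices, so that the determinant-multiplicativity lemma is applied with the correct convention and signs. In item~2(b) the corresponding nuisance is the interaction of the antimultiplicative involutions with $S$ (via $*_{PA}\circ S=S^{-1}\circ *_{PA}$) and with the column-swapped, hence non-quantum-matrix, presentation of the $\bt$'s; this is exactly the ``tedious direct calculation'' the statement alludes to.
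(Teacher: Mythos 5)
Your proposal is correct and follows essentially the same route as the paper: item~1 as a corollary of the Poincar\'e coaction, item~2 by direct generator-by-generator verification of the intertwining identity (with the Klein case trivial and the antimultiplicative cases handled via the antipode compatibility), and item~3 by exploiting the commutativity $x_{ij}y_{km}=y_{km}x_{ij}$ so that the quantum determinant factors as $\det\nolimits_q(y)\det\nolimits_q(x)\otimes Q_q=1\otimes Q_q$. Your matrix-factorization phrasing of item~3 is just a more structured formulation of the paper's ``elementary reordering,'' and your flagged bookkeeping concerns ($q$ versus $q^{-1}$ conventions, the column swap) are exactly the ones the paper glosses over as tedious.
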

\begin{proof} The first point is a direct consequence of Proposition (\ref{coaction-min}).

 The relation $
(*_{PA} \times *_{Aq}) \circ \lambda_L=\lambda_L \circ *_{Aq}
$ arises from a direct calculation; consider for example the element $\bt_{11}$ and the Minkowskian involutions, then one has
\begin{gather*}
\lambda_{L}(\bt_{11})=y_{11}S(x_{11})\otimes\bt_{11}+y_{12}S(x_{21})\otimes\bt_{22}+y_{12}S(x_{11})\otimes\bt_{21}+y_{11}S(x_{21})\otimes\bt_{12}\\
\hphantom{\lambda_{L}(\bt_{11})}{} = y_{11}x_{22}\otimes\bt_{11}-q^{-1}y_{12}x_{21}\otimes\bt_{22}+y_{12}x_{22}\otimes\bt_{21}-q^{-1}y_{11}x_{21}\otimes\bt_{12}.
\end{gather*}
Applying then $(*_{PM} \times *_{Mq})$ and keeping in mind that the Minkowskian involution is antimultiplicative one gets
\begin{gather*}
(*_{PM} \times *_{Mq})\circ \lambda_{L}(\bt_{11})=y_{11}x_{22}\otimes\bt_{11}-q^{-1}y_{12}x_{21}\otimes\bt_{22}-q^{-1}y_{11}x_{21}\otimes\bt_{12}+y_{12}x_{22}\otimes\bt_{21}\\
\hphantom{(*_{PM} \times *_{Mq})\circ \lambda_{L}(\bt_{11})}{} =\lambda_{L}(\bt_{11}),
\end{gather*}
and since $*_{Mq}(\bt_{11})=\bt_{11}$ the result follows naturally. Similarly one proves the same for the other elements and involutions.

To prove the point 3 it is enough to observe that, due to the Manin relation and the parabolic structure we consider, one has $x_{ij}y_{km}=y_{km}x_{ij}$ and after some elementary reordering and algebras, one gets
\begin{gather*}
\lambda_L(Q_q)=\det(y)\det(x)\otimes Q_q.
\end{gather*}
Considering that ${\det}_q(x)=1={\det}_q(y)$, we obtain that the quantum metric is coinvariant as claimed.
\end{proof}

Observe that this proposition tells us in fact that we have a well defined coaction of $(\mathcal{L}_{q},*_{PA})$ on $\big(\mathcal{M}_{4}^{q} ,*_A\big)$ preserving the quantum metric.

\subsection{Formulation with the algebraic star product}
In this section, we want to present the results obtained in the previous section in terms of classical objects. In particular we want to take advantage of the isomorphism between the algebra $\textrm{M}_q(2)$ and $\big(\mathcal{O}(\mathcal{M}_{2,2})\big[q,q^{-1}\big],\star\big)\equiv k_q[\tau_{41},\tau_{42},\tau_{31},\tau_{32}]$, where $\star$ is certain non-commutative product naturally inherited from the underlying quantum group structure. This approach and language is probably more familiar to physicists because of its versatility. As mentioned before, the starting point for this construction is the observation that the map
\begin{align*}
\Phi_q\colon \ k_q[\tau_{41},\tau_{42},\tau_{31},\tau_{32}]&\longrightarrow \textrm{M}_q(2),\\
 \tau_{41}^m \tau_{42}^n \tau_{31}^p \tau_{32}^r&\mapsto \bt_{41}^m \bt_{42}^n \bt_{31}^p \bt_{32}^r
\end{align*}
is a module isomorphism and thus has an inverse; $\Phi_q:$ is called quantization map, and essentially it encodes a choice of ordering. It is here worth stressing that with this choice of ordering one gets a basis for $\textrm{M}_q(2)$ as proved in \cite{Cerv-1}, but in principle one could make other compatible choices. Next, we define the following non commutative product
\begin{gather*}
f \star g=\Phi_q^{-1}(\Phi_q(f) \Phi_q(g)) , \qquad f\in \big(\mathcal{O}(\mathcal{M}_{2,2})\big[q,q^{-1}\big]\big).
\end{gather*}
Using Manin's relations (\ref{manin-rel2}), then one can easily construct the following
\begin{gather*}
\big(\tau_{41}^a\tau_{42}^b\tau_{31}^c\tau_{32}^d\big)\star \big(\tau_{41}^m\tau_{42}^n\tau_{31}^p\tau_{32}^r\big) = q^{-m(c+b)-d(n+p)}\big(\tau_{41}^{a+m} \tau_{42}^{b+n} \tau_{31}^{c+p} \tau_{32}^{d+r}\big)\\
\qquad{}+ \sum\limits_{k=1}^{\min (d,m)} q^{(k-m)(c+b)+(k-d)(n+p)}F(k,q,d,m) \tau_{41}^{a+m-k}\tau_{42}^{b+k+n}\tau_{31}^{c+k+p}\tau_{32}^{d-k+r},
\end{gather*}
where $F(k,q,d,m)$ is a function defined recursively (see \cite{Cerv-1} for more detail). We comment that there exists a (unique) differential star product, thus acting on $C^{\infty}$ functions, that coincides with the one given above on polynomials.

We can now pullback the metric to the star product algebra, obtaining
\begin{gather*}
Q=\Phi_q^{-1}(Q_q)=x_2^2+x_3^2-qx_0^2-qx_1^2,
\end{gather*}
where the star product among polynomials on the real variables $x$ and is naturally inherited from the $\bx$ commutation relations.

\section[The $\mathcal{N}=1$ quantum Klein superspace $\mathcal{M}_{2,2|1}^{q}$]{The $\boldsymbol{\mathcal{N}=1}$ quantum Klein superspace $\boldsymbol{\mathcal{M}_{2,2|1}^{q}}$}\label{smink-sec}

The extension of the natural construction that we have done in the previous sections to the supersetting offers no difficulty, hence we will summarize quickly the relevant definitions and the results. For all of the supergeometry terminology we refer the reader to~\cite{ccf, fl}.

\subsection{Quantum supergroups} \label{qsgrps-subsec}

We start with the definition of quantum matrix superalgebra and quantum special linear supergroup. Let $k=\C$ or $\C_s$, as in beginning of Section~\ref{prelim-sec} and $k_q=k\big[q,q^{-1}\big]$.

\begin{Definition} We define \textit{quantum matrix superalgebra}
\begin{gather*}
\rM_q(m|n)\overset{\rm def}{=}k_q\langle a_{ij} \rangle /I_M,
\end{gather*}
where $I_M$ is generated by the relations \cite{ma2}
\begin{gather*}
a_{ij}a_{il}=(-1)^{\pi(a_{ij})\pi(a_{il})} q^{(-1)^{p(i)+1}}a_{il}a_{ij}, \qquad j < l,\nonumber \\
a_{ij}a_{kj}=(-1)^{\pi(a_{ij})\pi(a_{kj})} q^{(-1)^{p(j)+1}}a_{kj}a_{ij}, \qquad i < k, \nonumber\\
a_{ij}a_{kl}=(-1)^{\pi(a_{ij})\pi(a_{kl})}a_{kl}a_{ij}, \qquad i< k,j > l \quad \text{or} \quad i > k,j < l,\nonumber \\
a_{ij}a_{kl}-(-1)^{\pi(a_{ij})\pi(a_{kl})}a_{kl}a_{ij}= \eta\big(q^{-1}-q\big)a_{kj}a_{il}, \qquad i<k,j<l,
\end{gather*}
where
\begin{gather*}
\eta=(-1)^{p(k)p(l)+p(j)p(l)+p(k)p(j)}
\end{gather*}
with
$p(i)=0$ if $1 \leq i \leq m$, $p(i)=1$ otherwise and $\pi(a_{ij})=p(i)+p(j)$ denotes the parity of $a_{ij}$. $\rM_q(m|n)$ is a bialgebra with the usual comultiplication and counit
\begin{gather*}
\Delta(a_{ij})=\sum a_{ik} \otimes a_{kj},\qquad \ep(a_{ij})=\de_{ij}.
\end{gather*}

We define the \textit{special linear quantum supergroup} $\rSL_q(m|n)$ as $\rM_q(m|n)/(B_q-1)$, where $B_q$ is the \textit{quantum Berezinian}, which is a central element in $\rM_q(m|n)$ (see \cite{fi5,P, zhang} and \cite[Chapter~5, Section~5.4]{fl} for more details).
\end{Definition}

We now turn to the more relevant definitions for us, namely the quantum Poincar\'{e} super\-group~$\mathcal{SP}_{q}$ and the $4|1$ dimensional quantum Minkowski superspace $\mathcal{M}^{k,q}$ as quantum homogeneous space, that is together with a coaction of $\mathcal{SP}_{q}$ on it.

\begin{Definition}We define {\it quantum Poincar\'{e} supergroup} $\mathcal{SP}_{q}$ as the quotient of $\rSL_q(m|n)$ by the generators $a_{ij}$, $\alpha_{k5}$, $\alpha_{5l}$, $i,j=1,2$ or $i,j=3,4$ or $i=3,4$, $j=1,2$ and $k=1,2$, $l=3,4$. In quantum matrix form
\begin{gather*}
\mathcal{SP}_q=\left(\begin{matrix}
L & 0 & 0 \\ M & R & \chi \\ \phi & 0 & d
\end{matrix}\right),
\end{gather*}
where
\begin{gather*}
L=\left(\begin{matrix} a_{11} & a_{12} \\ a_{21} & a_{22}
\end{matrix}\right), \qquad
M=\left(\begin{matrix} a_{31} & a_{32} \\ a_{41} & a_{42}
\end{matrix}\right), \qquad
R=\left(\begin{matrix} a_{33} & a_{34} \\ a_{43} & a_{44}
\end{matrix}\right), \\
\chi=\left(\begin{matrix} \al_{35} \\ \al_{45}
\end{matrix}\right), \qquad
\phi=\left(\begin{matrix} \al_{53} & \al_{54}
\end{matrix}\right), \qquad
d=a_{55}
\end{gather*}
in terms of the generators $a_{ij}$, $\al_{kl}$ of $\rSL_q(m|n)$, the quantum special linear supergroup.
\end{Definition}

\subsection{Quantum chiral Minkowski superspace}\label{qsmink-subsec}

We now define the {\it quantum Minkowski superspace} $\mathcal{SM}^{k,q}$ as generated by the matrices $t$ and $\tau$, in analogy with our heuristic derivation in~(\ref{eur-der}), obtained through the equality
\begin{gather*} 
\begin{pmatrix} L & 0 & 0\\ tL & R & 0 \\ \tau L & \nu & d
\end{pmatrix}
\begin{pmatrix}I_2 & s & \sigma \\ 0 & I_2 & \rho \\
0 & 0 & 1\end{pmatrix}
=\begin{pmatrix}a_{11} & \dots & a_{14} & \al_{15} \\
 \vdots & & \vdots & \vdots\\
a_{41} & \dots & a_{44} & \al_{45}\\
 \al_{51} & \dots & \al_{54}& a_{55}\end{pmatrix}.
\end{gather*}

After a small calculation, the following result is achieved:
\begin{gather*}
t=(\bt_{kj})=\begin{pmatrix}-q^{-1}D_{23} {D_{12} }^{-1} &
D_{13} {D_{12} }^{-1} \\
-q^{-1}D_{24} {D_{12} }^{-1} &
D_{14} {D_{12} }^{-1} \end{pmatrix},\nonumber\\
\tau=(\tau_{5j})=\begin{pmatrix}-q^{-1}D_{25} {D_{12} }^{-1} &
D_{15} {D_{12} }^{-1} \end{pmatrix}.
\end{gather*}
As above, both $t$ and $\tau$ are quantum matrices, once the first and second columns are interchanged.

\begin{Remark} Notice that our definition is over $k$ and so works for $k=\C$, giving the quantum complex Minkowski superspace, but also over $k=\C_s$, hence giving the quantum $\C_s$ Minkowski superspace. The advantage of our unified treatment is that, when looking at real forms, we will obtain at once both the real Minkowski and real Klein quantum superspaces.
\end{Remark}

Our definition corresponds to identify the Minkowski superspace with the translation supergroup inside the Poincar\'{e} supergroup as we did in Section~\ref{qkm-subsec}.

\begin{Proposition}\label{supercoaction}
The $\mathcal{N}=1$ quantum Minkowski superspace $\mathcal{{S}M}^{k,q}$ is a quantum homogeneous superspace for the quantum Poincar\'{e} supergroup. The coaction is explicitly given as
\begin{align*}
\delta^{(s)} \colon \ \mathcal{M}^{k,q} & \lra \mathcal{SP}_q \otimes \mathcal{M}^{k,q}, \\
\bt_{ij} & \mapsto t_{ij} \otimes 1+\sum_{u,v} r_{iu}S(\ell_{vj})\otimes \bt_{uv} + \sum_{v}\chi_{i5} S(\ell_{vj})\otimes \tau_{5v}, \\
\tau_{5j} & \mapsto \psi_{5j} \otimes 1 +\sum_v d S(\ell_{vj}) \otimes \tau_{5v},
\end{align*}
where $R=(r_{ij})$, $L=(\ell_{kl})$. $($To ease the notation, we replace the undetermined in $M$ with $NL$, for a suitable $N$ and similarly $\phi$ with $\psi L$ for a suitable $\psi.)$
\end{Proposition}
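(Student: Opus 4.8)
The plan is to mimic exactly the argument used in Proposition~\ref{coaction-min} for the non-super case, but now starting from the comultiplication of $\rSL_q(m|n)$ restricted to $\mathcal{SP}_q$. Since $\mathcal{SM}^{k,q}$ is by construction the subalgebra generated by the entries of $t$ and $\tau$, which are identified with the translational part of the quantum Poincar\'e supergroup, the coaction $\delta^{(s)}$ should arise as the restriction of the comultiplication $\Delta$ of $\rSL_q(m|n)$ to these generators. First I would write out $\Delta$ applied to the defining block-triangular matrix, using that $\Delta$ is multiplicative and respects the matrix product structure:
\begin{gather*}
\Delta\begin{pmatrix} L & 0 & 0\\ tL & R & 0 \\ \tau L & \nu & d \end{pmatrix}=\begin{pmatrix} L & 0 & 0\\ tL & R & 0 \\ \tau L & \nu & d \end{pmatrix}\otimes\begin{pmatrix} L & 0 & 0\\ tL & R & 0 \\ \tau L & \nu & d \end{pmatrix}.
\end{gather*}

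Next I would extract the relevant blocks from the matrix product. Reading off the $(2,1)$ block gives $\Delta(tL)=tL\otimes L+R\otimes tL+0\cdot(\ldots)$ (the $\chi$-column contributes through the fermionic entries), so more precisely $\Delta(tL)=R\otimes tL + tL\otimes L + \chi\otimes \tau L$, where $\chi$ is the odd column. Reading off the $(3,1)$ block gives $\Delta(\tau L)=\tau L\otimes L + \nu\otimes tL + d\otimes\tau L$. After replacing $M=NL$ and $\phi=\psi L$ as indicated in the statement (so that $\nu = \psi L$ plays the role of the odd translation $\psi$), I would then multiply both sides on the right by $S(L)\otimes S(L)$, using that $L$ is invertible in the relevant localization and that $S$ is the antipode. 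Applying the counit axiom $\sum \ell_{vj}S(\ell_{jk})=\delta_{vk}$ collapses the $L\otimes L$ factors, yielding precisely the stated formulas with $r_{iu}S(\ell_{vj})$ and $dS(\ell_{vj})$ appearing, and $t_{ij}\otimes 1$, $\psi_{5j}\otimes 1$ as the inhomogeneous translation terms.

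The main obstacle I expect is bookkeeping the signs coming from the $\Z_2$-grading: the odd generators $\al_{i5}$, $\al_{5j}$ pick up Koszul signs $(-1)^{\pi(\cdot)\pi(\cdot)}$ when commuted past one another in forming the product of supermatrices, and one must verify that these signs are exactly those implicit in the superalgebra structure of $\mathcal{SP}_q\otimes\mathcal{M}^{k,q}$ (the graded tensor product). In particular, one has to check that the term $\chi_{i5}S(\ell_{vj})\otimes\tau_{5v}$ carries the correct sign and that the coassociativity/coaction axiom $(\Delta\otimes\id)\circ\delta^{(s)}=(\id\otimes\delta^{(s)})\circ\delta^{(s)}$ holds, which is automatic once $\delta^{(s)}$ is identified as a restriction of the coassociative $\Delta$. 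The remaining verification that $\delta^{(s)}$ really lands in $\mathcal{SP}_q\otimes\mathcal{M}^{k,q}$ (rather than all of $\rSL_q(m|n)\otimes\rSL_q(m|n)$) follows because the right-hand entries $\bt_{uv}$, $\tau_{5v}$ are by definition the generators of $\mathcal{M}^{k,q}$ and the left-hand coefficients $r_{iu}$, $\ell_{vj}$, $d$, $\chi_{i5}$, together with $t_{ij}$, $\psi_{5j}$, all lie in $\mathcal{SP}_q$; this is the super-analogue of the observation following Proposition~\ref{coaction-min}. Hence the proof reduces to the sign-careful but otherwise routine matrix manipulation sketched above.
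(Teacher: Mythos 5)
Your proposal is correct and takes essentially the same route as the paper: the paper's own proof is the single remark that $\delta^{(s)}$ is the restriction of the $\rSL_q(4|1)$ comultiplication to the blocks $\bt$ and $\tau$, i.e., exactly the super-analogue of the block computation in Proposition~\ref{coaction-min} (comultiply the factorized matrix, read off the $(2,1)$ and $(3,1)$ blocks, multiply by $S(L)\otimes S(L)$, and use $\sum_j \ell_{vj}S(\ell_{jk})=\delta_{vk}$), which is what you spell out in detail. Your version is in fact more explicit than the paper's, including the point that the $\nu$-block vanishes in $\mathcal{SP}_q$ so no $\bt$-term appears in the image of $\tau_{5j}$, and the check that the image lies in $\mathcal{SP}_q\otimes\mathcal{SM}^{k,q}$.
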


\begin{proof} This again follows naturally from our construction since the coaction of $\delta^{(s)}$ corresponds to the restriction of the $\rSL_q(4|1)$ coaction to the blocks $\bt$ and $\tau$.
 \end{proof}

\subsection{Real forms}\label{superreal-subsec}

The chiral complex Minkowski superspace does not admit a physically interesting real form, because the odd part is spinorial and in this case, the (Weyl) semispinors are complex. In the unquantized setting, this defect is fixed by considering extra odd coordinates and pairing them to obtain such a real form (see \cite[Chapter~4]{fl} for an exhaustive treatment and the references within). A quantization of the quantum real Minkowski superspace presents then difficulties, since the extra odd coordinates must also be quantized and calculations become intricated.

On the other hand, the quantization of the chiral Klein superspace follows very naturally in our construction. It should be here recalled that in the non supersymmetric case we have defined three different involution yielding the same geometrical structure; in the following, we will focus for simplicity on the $\mathcal{N}=1$ superextension of the case associated to the Klein involution.

We define the \textit{quantum real chiral Klein superspace} as the pair $\big(\mathcal{SM}^{k,q},{*_{SKq}}\big)$, where $*_{SKq}$ is the antilinear involution
\begin{align*}
*_{SKq}\colon \qquad \mathcal{SM}^{k,q} &\lra \mathcal{{S}M}^{k,q}, \\
\begin{pmatrix}\bt_{31} & \bt_{32} \\ \bt_{41} & \bt_{42} \end{pmatrix} &\longmapsto \begin{pmatrix}\bt_{31} & \bt_{32} \\ \bt_{41} & \bt_{42}\end{pmatrix}, \\
\begin{pmatrix}\tau_{51} & \tau_{52} \end{pmatrix}&\longmapsto \begin{pmatrix}\tau_{51} & \tau_{52}\end{pmatrix},\\
{q}&\longmapsto {q},
\end{align*}
which is the identity on the coordinates. Note that since we want it to be multiplicative as in the even case, we need to send $q$ to $q$.

Our construction is compatible to what we have discussed in our previous sections. The metrics are the ones discussed in Section~\ref{qmin-subsec}, since the odd variables do not modify them (see also \cite[Chapter~4]{fl}).

$\big(\mathcal{SM}^{kq},{*_{SKq}}\big)$ is a real homogeneous quantum superspace, when the antilinear multiplicative involution on the quantum supergroup $\mathcal{SP}_q$ (which is the identity on the generators and send $q$ to itself) is considered. One can in fact immediately see that preserves the coaction as given in Proposition~\ref{supercoaction}.

\section{Conclusions}
In this paper we propose an algebraic quantization, in the sense of quantum groups, of the complex and split complex Minkowski spaces $\mathcal{M}_{4}^{\mathbb{C},q}$ and $\mathcal{M}_{4}^{\mathbb{C}_s,q}$ viewed as quantum homogenous spaces; we focus our attention on their real forms, yielding the Lorentzian, Kleinian and Euclidean signatures $(3,1)$, $(2,2)$, $(4,0)$. All this is summarized in the following table:
\begin{center}
\begin{tabular}{c|l}
\hline
real quantum space&signature\\[1mm] \hline
$(*_{_{Kq}},\mathcal{M}_{4}^{\C,q}$) &$(2,2)$\\
 $(*_{_{Kq}}\mathcal{M}_{4}^{\C_s,q})$ &$(2,2)$\\
\hline
$(*_{_{Mq}},\mathcal{M}_{4}^{\C,q}$) &$(3,1)$\\
$(*_{_{Mq}},\mathcal{M}_{4}^{\C_s,q})$ &$(2,2)$\\
\hline
 ($*_{_{Eq}},\mathcal{M}_{4}^{\C,q})$ &(4,0)\\
$(*_{_{Eq}},\mathcal{M}_{4}^{\C_s,q}$) &(2,2)
 \end{tabular}
\end{center}
The beauty of our approach is that those spaces are naturally endowed with the (co)action of the corresponding isometry quantum group, as we explicitly show; moreover, for all these spaces, we give an explicit representation of the deformed coordinates algebra. In this setting we also extend such analysis to the supercase by constructing the chiral Klein superspace.

\subsection*{Acknowledgments}

We would like to thank Professors Francesco Bonechi, Meng-Kiat Chuah and Fabio Gavarini for useful discussions and helpful comments. We also wish to thank our anonymous referees for helpful comments, which have helped us to improve the clarity of our paper. A.M.~wishes to thank the Department of Mathematics at the University of Bologna, for the kind hospitality during the realization of this work.

\pdfbookmark[1]{References}{ref}
\LastPageEnding

\end{document}